\newcommand{\be}{\begin{equation}}
\newcommand{\ee}{\end{equation}}
\newcommand{\bN}{\mathbb{N}}
\newcommand{\bR}{\mathbb{R}}
\newcommand{\bC}{\mathbb{C}}
\newcommand{\cA}{\mathcal{A}}
\newcommand{\cC}{\mathcal{C}}
\newcommand{\cF}{\mathcal{F}}
\newcommand{\cH}{\mathcal{H}}
\newcommand{\cK}{\mathcal{K}}
\newcommand{\cM}{\mathcal{M}}
\newcommand{\cO}{\mathcal{O}}
\newcommand{\cP}{\mathcal{P}}
\newcommand{\cR}{\mathcal{R}}
\newcommand{\cS}{\mathcal{S}}
\newcommand{\cT}{\mathcal{T}}
\newcommand{\cV}{\mathcal{V}}
\newcommand{\cW}{\mathcal{W}}
\newcommand{\mfc}{\mathfrak{c}}
\newcommand{\mfu}{\mathfrak{u}}
\newcommand{\mfg}{\mathfrak{g}}
\newcommand{\mfgl}{\mathfrak{gl}}
\newcommand{\mfk}{\mathfrak{k}}
\newcommand{\mfL}{\mathfrak{L}}
\newcommand{\bsone}{\boldsymbol{1}}
\newcommand{\bsq}{\boldsymbol{q}}
\newcommand{\bsp}{\boldsymbol{p}}
\newcommand{\bslambda}{\boldsymbol{\lambda}}
\newcommand{\bsLambda}{\boldsymbol{\Lambda}}
\newcommand{\bstheta}{\boldsymbol{\theta}}
\newcommand{\bsC}{\boldsymbol{C}}
\newcommand{\bsQ}{\boldsymbol{Q}}
\newcommand{\ri}{\text{i}}
\newcommand{\dd}{\text{d}}
\newcommand{\diag}{\text{diag}}
\newcommand{\tr}{\text{tr}}
\newcommand{\bc}{\text{BC}}
\newcommand{\ext}{\text{ext}}
\newcommand{\half}{\frac{1}{2}}
\theoremstyle{plain}
\newtheorem{THEOREM}{Theorem}
\newtheorem{LEMMA}[THEOREM]{Lemma}
\begin{document}

\begin{center}
    \Large{\textbf{
    Scattering theory of the hyperbolic $BC_n$ Sutherland and the rational 
    $BC_n$ Ruijsenaars--Schneider--van Diejen models
    }}
\end{center}

\bigskip
\begin{center}
    B.G.~Pusztai\\
    Bolyai Institute, University of Szeged,\\
    Aradi v\'ertan\'uk tere 1, H-6720 Szeged, 
    Hungary\\
    e-mail: \texttt{gpusztai@math.u-szeged.hu}
\end{center}

\bigskip
\begin{abstract}
In this paper, we investigate the scattering properties of the hyperbolic 
$BC_n$ Sutherland and the rational $BC_n$ Ruijsenaars--Schneider--van Diejen 
many-particle systems with three independent coupling constants. Utilizing
the recently established action-angle duality between these classical 
integrable models, we construct their wave and scattering maps. In 
particular, we prove that for both particle systems the scattering map
has a factorized form.

\bigskip
\noindent
\textbf{Keywords:} 
\emph{Scattering theory; Integrable systems; Pure soliton systems}

\smallskip
\noindent
\textbf{MSC2010:} 81U10; 81U15; 37J35; 70H06

\smallskip
\noindent
\textbf{PACS number:} 02.30.Ik
\end{abstract}
\newpage

\section{Introduction}
\label{SECTION_Introduction}
\setcounter{equation}{0}
The Calogero--Moser--Sutherland (CMS) many-particle systems (see e.g. 
\cite{Calogero}, \cite{Sutherland}, \cite{Moser_1975}, 
\cite{Olsha_Pere_76}) and their relativistic deformations, the 
Ruijsenaars--Schneider--van Diejen (RSvD) models (see e.g. 
\cite{Ruij_Schneider}, \cite{van_Diejen}) are among the most actively 
studied integrable systems. They appear in several branches of mathematics 
and physics, with numerous applications ranging from symplectic geometry, 
Lie theory and harmonic analysis to solid state physics and Yang--Mills 
theory. The intimate connection with the theory of soliton equations is a 
particularly important and appealing feature of these finite dimensional 
integrable systems. It is a remarkable fact that the CMS and the RSvD models 
associated with the $A_n$ root system can be used to describe the soliton 
interactions of certain integrable field theories defined on the whole real 
line (see e.g. \cite{Ruij_Schneider}, \cite{Ruij_CMP1988}, 
\cite{Ruij_RIMS_2}, \cite{Ruij_RIMS_3}, \cite{Babelon_Bernard}). In 
particular, these particle systems are characterized by conserved asymptotic 
momenta and factorized scattering maps (see e.g. \cite{Ruij_CMP1988},
\cite{Kulish_1976}, \cite{Moser_1977},  
\cite{Ruij_FiniteDimSolitonSystems}). That is, in perfect analogy with the 
behavior of the solitons, the $n$-particle scattering is completely 
determined by the $2$-particle processes. However, apart form some heuristic 
arguments \cite{Kapustin_Skorik}, the link between the integrable boundary 
field theories and the particle models associated with the non-$A_n$-type 
root systems has not been developed yet. In our paper \cite{Pusztai_JPA2011} 
we have proved that the classical hyperbolic $C_n$ Sutherland model also has 
a factorized scattering map, but to our knowledge analogous results for the 
other non-$A_n$-type root systems are not available. Motivated by this fact, 
in this paper we work out in detail the scattering theory of the hyperbolic 
$BC_n$ Sutherland and the rational $BC_n$ RSvD models with the maximal
number of independent coupling constants.

Upon introducing the subset
\be
    \mfc 
    = \{ x = (x_1, \ldots, x_n) \in \bR^n 
        \, | \, 
        x_1 > \ldots > x_n > 0 \} \subset \bR^n,
\label{mfc}
\ee
let us recall that the classical hyperbolic $BC_n$ Sutherland and the 
rational $BC_n$ RSvD models live on the phase spaces
\be
    \cP^S = \{ (q, p) \, | \, q \in \mfc, \, p \in \bR^n \}
    \quad \text{and} \quad
    \cP^R = 
        \{ (\lambda, \theta) 
            \, | \, 
            \lambda \in \mfc, \, \theta \in \bR^n \},
\label{cP_S&R}
\ee
respectively. By endowing these spaces with the natural symplectic forms
\be 
    \omega^S = \sum_{a = 1}^n \dd q_a \wedge \dd p_a 
    \quad \text{and} \quad 
    \omega^R = \sum_{a = 1}^n \dd \lambda_a \wedge \dd \theta_a,
\label{omega_S&R}
\ee
we may think of $\cP^S$ and $\cP^R$ as two different copies of the cotangent 
bundle $T^* \mfc$. The hyperbolic $BC_n$ Sutherland model is characterized 
by the interacting many-body Hamiltonian
\be
    H^S =  
    \sum_{a = 1}^n 
        \left( 
            \frac{p_a^2}{2} + g_1^2 w(q_a) + g_2^2 w(2 q_a) 
        \right) 
    + \sum_{1 \leq a < b \leq n}
        \left( 
            g^2 w(q_a -q_b) + g^2 w(q_a + q_b)
        \right), 
\label{H_S}
\ee
with potential function $w(x) = \sinh(x)^{-2}$. To ensure the repulsive 
nature of the interaction, on the real coupling parameters $g$, $g_1$ and 
$g_2$ we impose the constraints $g^2 > 0$ and $g_1^2 + g_2^2 > 0$. As for 
the rational $BC_n$ RSvD model, the dynamics is governed by the Hamiltonian
\be
    H^R 
    = \sum_{a = 1}^n \cosh(2 \theta_a) v_a(\lambda)
        + \frac{\nu \kappa}{4 \mu^2}
            \prod_{a = 1}^n 
            \left(
                1 + \frac{4 \mu^2}{\lambda_a^2}
            \right)
        - \frac{\nu \kappa}{4 \mu^2},
\label{H_R}
\ee
where the real parameters $\mu$, $\nu$ and $\kappa$ appear in the function
\be
    v_a(\lambda) =
    \left(
        1 + \frac{\nu^2}{\lambda_a^2}
    \right)^\half 
    \left(
        1 + \frac{\kappa^2}{\lambda_a^2}
    \right)^\half
    \prod_{\substack{b = 1 \\ (b \neq a)}}^n
    \left(
        1 + \frac{4 \mu^2}{(\lambda_a - \lambda_b)^2}
    \right)^\half
    \left(
        1 + \frac{4 \mu^2}{(\lambda_a + \lambda_b)^2}
    \right)^\half,
\label{v_a}
\ee
too. Let us note that on the RSvD coupling parameters we impose
the conditions $\mu \neq 0$, $\nu \neq 0$ and $\nu \kappa \geq 0$. 

Working in a symplectic reduction framework, in our paper 
\cite{Pusztai_NPB2012} we established the action-angle duality between the 
hyperbolic $BC_n$ Sutherland and the rational $BC_n$ RSvD models, provided 
the coupling parameters satisfy the relations
\be
	g^2 = \mu^2, 
	\quad 
	g_1^2 = \half \nu \kappa, 
	\quad 
	g_2^2 = \half (\nu - \kappa)^2.
\label{coupling_relations}
\ee
By continuing our work \cite{Pusztai_NPB2012}, in this paper we explore the 
scattering theory of these particle systems. More precisely, after a brief 
overview in section \ref{SECTION_Preliminaries} on the necessary background 
material, in section \ref{SECTION_Scattering_theory} we examine the temporal 
asymptotics of the trajectories of the Sutherland and the RSvD many-body 
models. The outcome of our analysis is formulated in lemmas 
\ref{LEMMA_Sutherland_asymptotics} and \ref{LEMMA_RSvD_asymptotics}. 
As expected, the asymptotics naturally lead to the wave and the scattering 
maps, too. Quite remarkably, the symplecticity of the wave maps can be seen 
as an immediate consequence of the duality between these integrable many-body 
models. We also prove that the scattering maps have factorized forms, which 
is the characteristic property of the soliton systems. Our main results on 
the wave and the scattering maps are summarized in theorems 
\ref{THEOREM_Sutherland_scattering} and \ref{THEOREM_RSvD_scattering}. To 
conclude the paper, in section \ref{SECTION_Discussion} we discuss our
results and offer some open problems related to the scattering theory of 
the CMS and the RSvD systems.

\section{Preliminaries}
\label{SECTION_Preliminaries}
\setcounter{equation}{0}
One of the main ingredients of the scattering theory we wish to develop in 
this paper is the action-angle duality between the hyperbolic $BC_n$ Sutherland 
and the rational $BC_n$ RSvD systems \cite{Pusztai_NPB2012}. To keep the paper 
self-contained, in this section we gather some relevant facts about the 
symplectic reduction derivation of these models with emphasis on their duality 
properties. Up to some minor changes in the conventions, our overview closely 
follows \cite{Pusztai_NPB2012}. For convenience, we formulate our results 
using the three independent RSvD coupling parameters $\mu$, $\nu$ and 
$\kappa$. Without loss of generality, we may assume from the outset that 
$\nu > 0 > \mu$ and $\kappa \geq 0$.

\subsection{Background material from symplectic geometry}
Upon taking an arbitrary positive integer $n \in \bN$, we introduce the set 
\be
    \bN_n = \{1, \ldots, n\} \subset \bN.
\label{bN_n}
\ee 
In the following we also frequently use the notation $N = 2 n$. Now, with the
aid of the unitary $N \times N$ matrix 
\be
    \bsC 
    = \begin{bmatrix}
    	0_n & \bsone_n \\
    	\bsone_n & 0_n
    \end{bmatrix},
\label{bsC} 
\ee
we define the non-compact real reductive matrix Lie group  
\be
	G = U(n, n) = \{ y \in GL(N, \bC) \, | \, y^* \bsC y = \bsC \}
\label{G}
\ee
with Lie algebra
\be
	\mfg 
	= \mfu(u, n) 
	= \{ Y \in \mfgl(N, \bC) \, | \, Y^* \bsC + \bsC Y = 0 \}.
\label{mfg}
\ee
Turning to $\cP = G \times \mfg$, at each point $(y, Y) \in \cP$ we 
define the $1$-form
\be
    \vartheta_{(y, Y)} (\delta y \oplus \delta Y) 
    = \tr \left( Y y^{-1} \delta y \right)
    \qquad (\delta y \in T_y G, \, \delta Y \in T_Y \mfg \cong \mfg).
\label{vartheta}
\ee
It is clear that the product manifold $\cP$ equipped with the symplectic 
form $\omega = -\dd \vartheta$ provides a convenient model for the 
cotangent bundle $T^* G$.

Next, for each column vector $V \in \bC^N$ we define the $N \times N$ matrix
\be
    \xi(V) = \ri \mu (V V^* - \bsone_N) + \ri (\mu - \nu) \bsC.
\label{xi}
\ee
Also, by taking the unitary elements, in $G$ (\ref{G}) we choose the maximal 
compact subgroup 
\be
    K = U(n, n) \cap U(N) \cong U(n) \times U(n).
\label{K}
\ee 
In its Lie algebra 
$\mfk = \mfu(n, n) \cap \mfu(N)$ the subset
\be
    \cO = 
    \{ \xi(V)
        \, | \,
        V \in \bC^N, \, V^* V = N, \, \bsC V + V = 0 \} \subset \mfk
\label{cO}
\ee
forms an orbit under the adjoint action of $K$. Consequently, it comes 
naturally equipped with the Kirillov--Kostant--Souriau symplectic structure 
$\omega^\cO$ having the form
\be
    \omega^\cO_\rho( [X, \rho], [Z, \rho] ) 
    = \tr \left( \rho [X, Z] \right),
\label{omega_cO}
\ee
where $\rho \in \cO$ and $[X, \rho], [Z, \rho] \in T_\rho \cO$
with some $X, Z \in \mfk$.

Our study of the CMS and the RSvD particle systems is based on the symplectic 
reduction of the extended phase space $\cP^\ext = \cP \times \cO$ endowed 
with the symplectic form
\be
    \omega^\ext = \half (\omega + \omega^\cO).
\label{omega_ext}
\ee 
In passing we mention that the factor $1/2$ is inserted into this definition 
purely for convenience. Now, let us note that the symplectic left action of 
the product Lie group $K \times K$ on $\cP^\ext$ defined by the formula
\be
    (k_L, k_R) \cdot (y, Y, \rho)
    = (k_L y k_R^{-1}, k_R Y k_R^{-1}, k_L \rho k_L^{-1})
    \qquad
    ((k_L, k_R) \in K \times K)
\label{KK_action}
\ee
admits a $K \times K$-equivariant momentum map. Making use of the 
identification $(\mfk \oplus \mfk)^* \cong \mfk \oplus \mfk$ induced 
by an appropriate multiple of the $\tr$ functional, the momentum map can
be written as
\be
	J^\ext \colon \cP^\ext \rightarrow \mfk \oplus \mfk,
	\quad
	(y, Y, \rho)
	\mapsto
	( (y Y y^{-1})_\mfk + \rho ) \oplus (- Y_\mfk - \ri \kappa \bsC),
\label{J_ext}
\ee
where $Y_\mfk$ denotes the anti-Hermitian part of the matrix $Y$. 

The point of the above discussion is that by starting with certain 
$K \times K$-invariant Hamiltonians on $\cP^\ext$, both the hyperbolic 
$BC_n$ Sutherland and the rational $BC_n$ RSvD models with three independent 
coupling constants can be derived by reducing the symplectic manifold 
$(\cP^\ext, \omega^\ext)$ at the zero value of the momentum map $J^\ext$. 

\subsection{The Sutherland model from Marsden--Weinstein reduction}
One of the key objects in the symplectic reduction derivation of the 
hyperbolic $BC_n$ Sutherland model is its Lax matrix 
$L \colon \cP^S \rightarrow \mfg$. It has the form
\be
	L
    = \begin{bmatrix}
	 	A & B \\
	 	-B & -A
	\end{bmatrix}
	- \ri \kappa \bsC,
\label{L}
\ee
where $A$ and $B$ are appropriate $n \times n$ matrices. More precisely, 
their entries are
\be
	A_{a, b} = \frac{-\ri \mu}{\sinh(q_a - q_b)}, 
	\quad
	A_{c, c} = p_c, 
	\quad
	B_{a, b} = \frac{\ri \mu}{\sinh(q_a + q_b)}, 
	\quad
	B_{c, c} 
	= \ri \frac{\nu + \kappa \cosh(2 q_c)}{\sinh(2 q_c)}, 
\label{A_and_B}
\ee
where $a, b, c \in \bN_n$ and $a \neq b$. Next, to each real $n$-tuple 
$q = (q_1, \ldots, q_n)$ we associate the matrix
\be
    Q = \diag(q_1, \ldots, q_n, -q_1, \ldots, -q_n),
\label{Q}
\ee
and we also introduce the column vector $E \in \bC^N$ with components 
\be
    E_a = - E_{n + a} = 1 \qquad (a \in \bN_n).
\label{E}
\ee    
Finally, let $U(1)_*$ denote the diagonal embedding of $U(1)$ into 
$K \times K$ and define the product manifold 
\be
    \cM^S = \cP^S \times (K \times K) / U(1)_*.
\label{cM_S} 
\ee

Having equipped with the above objects, we can start the reduction 
procedure by analyzing the level set
\be
    \mfL_0 
    = \{ (y, Y, \rho) \in \cP^\ext \, | \, J^\ext(y, Y, \rho) = 0\}.
\label{mfL_0}
\ee
It turns out to be an embedded submanifold of $\cP^\ext$, and the 
diffeomorphism
\be
	\Upsilon^S_0 \colon \cM^S \rightarrow \mfL_0,
	\quad
	(q, p, (\eta_L, \eta_R) U(1)_*)
	\mapsto
	(\eta_L e^Q \eta_R^{-1},
	    \eta_R L(q, p) \eta_R^{-1}, 
	    \eta_L \xi(E) \eta_L^{-1})
\label{Upsilon_S_0}
\ee
gives rise to the identification $\cM^S \cong \mfL_0$. By inspecting the 
(residual) $K \times K$-action (\ref{KK_action}) on  $\cM^S$, it is clear
that the base manifold of the trivial principal $(K \times K)/U(1)_*$-bundle
\be
	\pi^S \colon \cM^S \twoheadrightarrow \cP^S,
	\quad
	(q, p, (\eta_L, \eta_R) U(1)_*)
	\mapsto (q, p)
\label{pi_S}
\ee
provides a realization of the reduced manifold. Since 
$(\pi^S)^* \omega^S = (\Upsilon_0^S)^* \omega^\ext$, we conclude that the 
Sutherland phase space $\cP^S$ (\ref{cP_S&R}) does serve as a convenient 
model for the reduced symplectic manifold $\cP^\ext / \! /_0 (K \times K)$. 
Moreover, making use of the $K \times K$-invariant Hamiltonian
\be
    F \colon \cP^\ext \rightarrow \bR,
    \quad
	(y, Y, \rho)
	\mapsto
	F	(y, Y, \rho) = \frac{1}{4} \tr(Y^2),
\label{F}
\ee
we find the relationship $(\pi^S)^* H^S = (\Upsilon_0^S)^* F$ as well. 
In other words, the reduced Hamiltonian induced by the quadratic function 
$F$ coincides with the Hamiltonian (\ref{H_S}) of the hyperbolic $BC_n$ 
Sutherland model with coupling constants displayed in equation 
(\ref{coupling_relations}).

\subsection{The symplectic reduction derivation of the RSvD model}
Compared to the Sutherland model, the reduction derivation of the rational 
$BC_n$ RSvD model requires a slightly longer preparation. As a first step, 
for each $a \in \bN_n$ we define the function
\be
    \mfc \ni \lambda
    \mapsto
    z_a(\lambda) 
        = - \left(1 + \frac{\ri \nu}{\lambda_a} \right)
            \prod_{\substack{b = 1 \\ (b \neq a)}}^n
            \left( 1 + \frac{2 \ri \mu}{\lambda_a - \lambda_b} \right)
            \left( 1 + \frac{2 \ri \mu}{\lambda_a + \lambda_b} \right) 
    \in \bC.
\label{z}
\ee
Also, we introduce the function $\cA \colon \cP^R \rightarrow G$ with
matrix entries
\be
\begin{split}
    & \cA_{a, b} 
        = e^{-\theta_a - \theta_b}
            \vert z_a z_b \vert^\half 
            \frac{2 \ri \mu}{2 \ri \mu + \lambda_a - \lambda_b},
    \quad 
    \cA_{n + a, n + b} 
        = e^{\theta_a + \theta_b}
            \frac{ \overline{z}_a z_b}
                {\vert z_a z_b \vert^\half} 
            \frac{2 \ri \mu}{2 \ri \mu - \lambda_a + \lambda_b},
    \\
    & \cA_{a, n + b} 
        = \overline{\cA}_{n + b, a} 
        = e^{- \theta_a + \theta_b} z_b  
            \vert z_a z_b^{-1} \vert^\half
            \frac{2 \ri \mu}{2 \ri \mu + \lambda_a + \lambda_b}
            + \frac{\ri(\mu - \nu)}{\ri \mu + \lambda_a} \delta_{a, b},
\label{cA}
\end{split}
\ee
where $a, b \in \bN_n$. As we have shown in our earlier paper 
\cite{Pusztai_NPB2011}, the positive definite $N \times N$ matrix $\cA$ 
provides a Lax matrix for the rational $C_n$ RSvD model with parameters 
$\mu$ and $\nu$.

Next, with the aid of the $\kappa$-dependent functions
\be
    \alpha(x) = \frac{\sqrt{x + \sqrt{x^2 + \kappa^2}}}{\sqrt{2 x}}
    \quad \mbox{and} \quad
    \beta(x) = \ri \kappa \frac{1}{\sqrt{2 x}}
        \frac{1}{\sqrt{x + \sqrt{x^2 +\kappa^2}}}
\label{alpha&beta}
\ee
defined for $x > 0$, for each $\lambda \in \mfc$ we introduce the 
$N \times N$ matrix
\be
    h(\lambda) 
        = \begin{bmatrix}
        \diag(\alpha(\lambda_1), \ldots, \alpha(\lambda_n)) 
            & \diag(\beta(\lambda_1), \ldots, \beta(\lambda_n)) 
        \\
        -\diag(\beta(\lambda_1), \ldots, \beta(\lambda_n)) 
            & \diag(\alpha(\lambda_1), \ldots, \alpha(\lambda_n))
        \end{bmatrix}
    \in G.
\label{h}
\ee
Utilizing $h(\lambda)$, the Lax matrix of the rational $BC_n$ RSvD model 
can be written as
\be
    \cA^\bc \colon \cP^R \rightarrow G, 
    \quad
    (\lambda, \theta)
    \mapsto
    \cA^\bc(\lambda, \theta) 
        = h(\lambda)^{-1} \cA(\lambda, \theta) h(\lambda)^{-1}.
\label{A_bc}
\ee
Heading toward our goal, we still need the column vector 
$\cF(\lambda, \theta) \in \bC^N$ with components 
\be 
    \cF_a(\lambda, \theta) 
    = e^{-\theta_a} \vert z_a(\lambda) \vert^\half
    \quad \mbox{and} \quad
    \cF_{n + a}(\lambda, \theta) 
    = e^{\theta_a} \overline{z_a(\lambda)} 
    \vert z_a(\lambda) \vert^{-\half},
\label{cF}
\ee
where $a \in \bN_n$, together with the column vector
\be
    \cV(\lambda, \theta) 
    = \cA(\lambda, \theta)^{-\half} 
    \cF(\lambda, \theta)
    \in \bC^N.
\label{cV}
\ee
Also, for each $\lambda = (\lambda_1, \ldots, \lambda_n) \in \bR^n$ 
we define the diagonal matrix
\be
    \Lambda = \diag(\lambda_1, \ldots, \lambda_n,
        -\lambda_1, \ldots, -\lambda_n). 
\label{Lambda}
\ee

At this point we can put the RSvD particle system into the context of 
symplectic reduction. Upon introducing the smooth product manifold 
\be
    \cM^R = \cP^R \times (K \times K) / U(1)_*,
\label{cM_R}
\ee    
the diffeomorphism $\Upsilon_0^R \colon \cM^R \rightarrow \mfL_0$ 
defined by the formula
\be
    (\lambda, \theta, (\eta_L, \eta_R) U(1)_*)
    \mapsto
    (\eta_L \cA(\lambda, \theta)^\half h(\lambda)^{-1} \eta_R^{-1},
        \eta_R h(\lambda) \Lambda h(\lambda)^{-1} \eta_R^{-1},
        \eta_L \xi(\cV(\lambda, \theta)) \eta_L^{-1})
\label{Upsilon_R_0}
\ee
provides an alternative parametrization of the level set $\mfL_0$ 
(\ref{mfL_0}). Moreover, the explicit form of the (residual) 
$K \times K$-action  (\ref{KK_action}) on the model space 
$\cM^R \cong \mfL_0$ permits us to identify the reduced manifold with 
the base manifold of the trivial principal $(K \times K) / U(1)_*$-bundle
\be
    \pi^R \colon \cM^R \twoheadrightarrow \cP^R,
    \quad
    (\lambda, \theta, (\eta_L, \eta_R)U(1)_*) 
    \mapsto 
    (\lambda, \theta).
\label{pi_R}
\ee
Since $(\pi^R)^* \omega^R = (\Upsilon_0^R)^* \omega^\ext$, we conclude 
that the reduced space $\cP^\ext / \! /_0 (K \times K)$ can be naturally 
identified with the RSvD phase space $\cP^R$ (\ref{cP_S&R}). Finally, let 
us consider the function
\be
    f \colon \cP^\ext \rightarrow \bR,
    \quad
    (y, Y, \rho) \mapsto f(y, Y, \rho) = \half \tr(y y^*).
\label{f}
\ee
From the relationship $(\pi^R)^* H^R = (\Upsilon_0^R)^* f$ we infer 
immediately that the reduced Hamiltonian function corresponding to the 
$K \times K$-invariant function $f$ coincides with the Hamiltonian 
(\ref{H_R}) of the rational $BC_n$ RSvD model.

\subsection{Action-angle duality}
The two different parameterizations $\Upsilon^S_0$ (\ref{Upsilon_S_0}) and 
$\Upsilon^R_0$ (\ref{Upsilon_R_0}) of the level set $\mfL_0$ (\ref{mfL_0}) 
induce two different models, $\cP^S$ and $\cP^R$ (\ref{cP_S&R}), of the 
symplectic quotient $\cP^\ext / \! /_0 (K \times K)$. Therefore there is 
a \emph{symplectomorphism}
\be
    \cS \colon \cP^S \rightarrow \cP^R
\label{cS}
\ee 
between $\cP^S$ and $\cP^R$, uniquely characterized by the equation
\be
    \cS \circ \pi^S \circ (\Upsilon^S_0)^{-1} 
        = \pi^R \circ (\Upsilon^R_0)^{-1}.
\label{cS_def}
\ee
Making use of this purely geometric observation we can easily construct 
canonical action-angle variables for both the Sutherland and the RSvD models.
Indeed, by pulling back the canonical positions and momenta of the RSvD 
system, the global coordinates $\cS^* \lambda_a$ and $\cS^* \theta_a$ on 
$\cP^S$ give rise to canonical action-angle variables for the Sutherland 
model. Similarly, by pulling back the canonical positions and momenta of 
the Sutherland system, the global coordinates $(\cS^{-1})^* q_a$ and 
$(\cS^{-1})^* p_a$ on $\cP^R$ provide canonical action-angle variables 
for the RSvD model. This remarkable phenomenon goes under the name of 
action-angle duality between the Sutherland and the RSvD particle systems.

\section{Scattering theory}
\label{SECTION_Scattering_theory}
\setcounter{equation}{0}
In this section we provide a thorough analysis of the time evolution of the 
hyperbolic $BC_n$ Sutherland and the rational $BC_n$ RSvD dynamics for the 
large positive and negative values of time $t$. Based on the resulting 
temporal asymptotics we construct the wave and the scattering maps as well.  

\subsection{Scattering properties of the Sutherland model}
Take an arbitrary point $(q, p) \in \cP^S$ and consider the Hamiltonian flow 
of the Sutherland model
\be
    \bR \ni t \mapsto (\bsq(t), \bsp(t)) \in \cP^S
\label{Sutherland_flow}
\ee
satisfying $(\bsq(0), \bsp(0)) = (q, p)$. Recalling 
the Sutherland Lax matrix (\ref{L}), we let $L = L(q, p)$. Keeping the 
notation (\ref{Q}) in effect, for each $t \in \bR$ we also define
\be
    \bsQ(t) 
    = \diag(\bsq_1(t), \ldots, \bsq_n(t), 
        -\bsq_1(t), \ldots, -\bsq_n(t)).
\label{bsQ}
\ee
Now, it is clear that the (complete) `geodesic' flow
\be
    \bR \ni t 
    \mapsto
    (e^Q e^{t L}, L, \xi(E)) \in \mfL_0
\label{unreduced_Sutherland_flow}
\ee
generated by the unreduced free Hamiltonian $F$ (\ref{F}) projects onto 
the (complete) reduced flow (\ref{Sutherland_flow}). In particular, utilizing
the parametrization $\Upsilon^S_0$ (\ref{Upsilon_S_0}), for all $t \in \bR$ 
we can write
\be
    e^Q e^{t L} = k_L(t) e^{\bsQ(t)} k_R(t)^{-1}
\label{G_component}
\ee
with some $k_L(t), k_R(t) \in K$. It entails the spectral identification
\be
    \sigma(e^{2 \bsQ(t)}) = \sigma(e^{2 Q} e^{t L} e^{t L^*}),
\label{Sutherland_spectral_identification_1}
\ee
which can be seen as the starting point of a purely algebraic solution 
algorithm of the Sutherland model based on matrix diagonalization.

Making use of the symplectomorphism $\cS$ (\ref{cS}), the above matrix flows
can be parametrized with the dual variables $(\lambda, \theta) = \cS(q, p)$ 
as well. First, recalling the matrices (\ref{cA}) and (\ref{h}), we introduce
the shorthand notations $\cA = \cA(\lambda, \theta)$ and $h = h(\lambda)$. 
Second, using the abbreviation defined in (\ref{Lambda}), notice that the 
parametrization $\Upsilon^R_0$ (\ref{Upsilon_R_0}) together with the defining
property of $\cS$ (\ref{cS_def}) immediately lead to the relationships
\be
    e^Q = \eta_L \cA^\half h^{-1} \eta_R^{-1}
    \quad \text{and} \quad
    L = \eta_R h \Lambda h^{-1} \eta_R^{-1}
\label{Sutherland_dual_parametrization}
\ee
with some $\eta_L, \eta_R \in K$. It readily follows that
\be
    e^{2 Q} e^{t L} e^{t L^*} 
    = \eta_R h^{-1} \cA e^{t \Lambda} h^{-2} e^{t \Lambda} h \eta_R^{-1}.
\label{Sutherland_flow_dual_parametrization}
\ee
However, since the functions (\ref{alpha&beta}) appearing in the definition 
of $h$ (\ref{h}) obey the identities
\be
    \alpha(x)^2 + \beta(x)^2 = 1,
    \quad
    \alpha(x)^2 - \beta(x)^2 = \sqrt{1 + \frac{\kappa^2}{x^2}},
    \quad
    2 \alpha(x) \beta(x) = \frac{\ri \kappa}{x},
\label{alpha&beta_identities}
\ee
we find easily that 
\be
    h^{-2} 
    = \sqrt{\bsone_N + \kappa^2 \Lambda^{-2}} 
        + \ri \kappa \bsC \Lambda^{-1}.
\label{h_formula}
\ee
Therefore, by plugging the above formulae into 
(\ref{Sutherland_spectral_identification_1}), we obtain
\be
    \sigma(e^{2 \bsQ(t)}) 
    = \sigma
        \left(
            \cA \sqrt{\bsone_N + \kappa^2 \Lambda^{-2}} e^{2 t \Lambda}
                + \ri \kappa \cA \bsC \Lambda^{-1}
        \right).
\label{Sutherland_spectral_identification_2}
\ee 
By exploiting the consequences of this formula, now we can work out the 
scattering theory of the Sutherland model. Before going into the details, 
for each $a \in \bN_n$ and $\lambda \in \mfc$ we define
\be
\begin{split}
    \Delta_a(\lambda)
    = & 
    -\half \sum_{b = 1}^{a - 1} 
        \ln \left( 1 + \frac{4 \mu^2}{(\lambda_a - \lambda_b)^2} \right)
    + \half \sum_{b = a + 1}^n 
        \ln \left( 1 + \frac{4 \mu^2}{(\lambda_a - \lambda_b)^2} \right)
    \\
    & + \half \sum_{\substack{b = 1 \\ (b \neq a)}}^n 
        \ln \left( 1 + \frac{4 \mu^2}{(\lambda_a + \lambda_b)^2} \right)
    + \half \ln \left( 1 + \frac{\nu^2}{\lambda_a^2} \right)
    + \half \ln \left( 1 + \frac{\kappa^2}{\lambda_a^2} \right).
\end{split}
\label{Delta}
\ee

\begin{LEMMA}
\label{LEMMA_Sutherland_asymptotics}
Take an arbitrary $(q, p) \in \cP^S$ and let 
$(\lambda, \theta) = \cS(q, p) \in \cP^R$. Consider the Hamiltonian flow 
of the Sutherland model
\be
    \bR \ni t \mapsto (\bsq(t), \bsp(t)) \in \cP^S
\label{LEMMA_Sutherland_flow}
\ee
satisfying $(\bsq(0), \bsp(0)) = (q, p)$. Then there are some positive 
constants $T > 0$, $r > 0$ and $C > 0$, such that for all $a \in \bN_n$ and
for all $t > T$ we have
\be
    \left\vert 
        \bsq_a(t) 
            - \left( t \lambda_a - \theta_a 
                + \half \Delta_a(\lambda) \right) 
    \right\vert \leq C e^{-t r}
    \quad \text{and} \quad
    \left\vert 
        \bsp_a(t) - \lambda_a 
    \right\vert \leq C e^{-t r}, 
\label{LEMMA_Sutherland_plus_infty}
\ee
whereas for all $a \in \bN_n$ and for all $t < - T$ we can write
\be
    \left\vert 
        \bsq_a(t) 
            - \left( t (-\lambda_a) + \theta_a 
                + \half \Delta_a(\lambda) \right) 
    \right\vert \leq C e^{t r}
    \quad \text{and} \quad
    \left\vert 
        \bsp_a(t) - (-\lambda_a) 
    \right\vert \leq C e^{t r}. 
\label{LEMMA_Sutherland_minus_infty}
\ee
\end{LEMMA}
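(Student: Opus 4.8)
The plan is to read off the asymptotics of $\bsq_a(t)$ directly from the spectral identification (\ref{Sutherland_spectral_identification_2}), which presents $\sigma(e^{2\bsQ(t)})$ as the spectrum of the explicit $t$-dependent matrix $M(t) = \cA\,\Phi(t)$, where $\Phi(t) = \sqrt{\bsone_N + \kappa^2 \Lambda^{-2}}\,e^{2 t \Lambda} + \ri \kappa \bsC \Lambda^{-1}$. First I would observe that the identification of eigenvalues with particle positions is automatic and requires no large-$t$ input: since $(\lambda,\theta) = \cS(q,p)$ with $\lambda \in \mfc$, the entries of $\Lambda$ (\ref{Lambda}) obey $\lambda_1 > \cdots > \lambda_n > 0 > -\lambda_n > \cdots > -\lambda_1$, and because every point of the flow lies in $\cP^S$ we always have $\bsq_1(t) > \cdots > \bsq_n(t) > 0$. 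Consequently the eigenvalues $e^{\pm 2 \bsq_a(t)}$ of $e^{2\bsQ(t)}$ are ordered so that the $a$-th largest one is precisely $e^{2\bsq_a(t)}$ for $a \in \bN_n$. The lemma thus reduces to a sharp asymptotic analysis of the $n$ largest eigenvalues of $M(t)$ as $t \to +\infty$.

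The mechanism I would use is the separation of scales created by $e^{2t\Lambda}$, whose first $n$ diagonal entries blow up like $e^{2t\lambda_a}$ while the last $n$ decay. The product of the $k$ largest eigenvalues, $\exp(2\sum_{a=1}^k \bsq_a(t))$, equals the $k$-th elementary symmetric function of $\sigma(M(t))$, hence the sum of all $k \times k$ principal minors of $M(t)$. Among these the leading principal minor on the index set $\{1,\dots,k\}$ dominates, growing like $e^{2t(\lambda_1 + \cdots + \lambda_k)}$, all other contributions being smaller by at least a factor $e^{-2t(\lambda_k - \lambda_{k+1})}$, or $e^{-4t\lambda_n}$ when $k = n$. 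Taking logarithms and forming the differences of consecutive relations telescopes the sums and isolates $2\bsq_a(t) = 2 t \lambda_a + (\text{const}_a) + O(e^{-tr})$.

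The remaining task is to evaluate these constants. By the Cauchy--Binet formula applied to $M(t) = \cA\,\Phi(t)$, the dominant part of the leading $k\times k$ minor factorizes as $\det(\cA_{[k],[k]}) \prod_{a=1}^k \sqrt{1 + \kappa^2/\lambda_a^2}\, e^{2t\lambda_a}$, the square roots coming from the diagonal of $h(\lambda)^{-2}$ recorded in (\ref{h_formula}). Using the entries (\ref{cA}), the upper-left block factors as $\cA_{[k],[k]} = D_k C_k D_k$ with $D_k = \diag(e^{-\theta_a} \vert z_a \vert^{\half})_{a=1}^k$ and $C_k$ the Cauchy-type matrix $[2\ri\mu/(2\ri\mu + \lambda_a - \lambda_b)]$, so that $\det(\cA_{[k],[k]}) = \prod_{a=1}^k e^{-2\theta_a}\vert z_a\vert \cdot \det C_k$. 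The Cauchy determinant formula yields $\det C_k / \det C_{k-1} = \prod_{a < k}(1 + 4\mu^2/(\lambda_a - \lambda_k)^2)^{-1}$, while $\vert z_a \vert$ is read off from (\ref{z}). Substituting these into the telescoped differences and combining the $\nu$-, $\kappa$-, difference- and sum-contributions, the accumulated logarithms reassemble term by term into exactly $-\theta_a + \half \Delta_a(\lambda)$ as defined in (\ref{Delta}); this bookkeeping is the most laborious but entirely mechanical part.

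The momentum estimate would follow from $\bsp_a = \dot{\bsq}_a$ together with the isospectrality of the Lax matrix: as the particles separate, the off-diagonal entries of $L(\bsq(t),\bsp(t))$, built from $1/\sinh(\bsq_a \pm \bsq_b)$, decay exponentially, so $L(t)$ converges to a matrix carrying the fixed spectrum $\{\pm\lambda_a\}$ along its diagonal, forcing $\bsp_a(t) \to \lambda_a$ at an exponential rate. The statement (\ref{LEMMA_Sutherland_minus_infty}) for $t \to -\infty$ I would obtain by repeating the whole argument with the roles of the two blocks of $e^{2t\Lambda}$ interchanged, so that the lower block now dominates and $\bsq_a \to -\infty$ with velocity $-\lambda_a$, or alternatively by a time-reversal symmetry of the Sutherland flow. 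The genuine difficulty, as opposed to the algebraic identities, will be making the error control rigorous: I must show that the subdominant principal minors and the subleading entries of $\Phi(t)$ perturb each of the $n$ largest eigenvalues only by a factor $1 + O(e^{-tr})$, uniformly, with a single rate $r > 0$ fixed by the spectral gaps $\min\{\lambda_a - \lambda_{a+1}, \, 2\lambda_n\}$, and then transfer this into the exponential bounds on $\bsq_a(t)$ and $\bsp_a(t)$ with the constants $T$ and $C$. This quantitative scale-separation estimate is the technical crux of the lemma.
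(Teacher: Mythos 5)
Your plan for the position asymptotics is essentially the paper's proof: the same spectral identification (\ref{Sutherland_spectral_identification_2}), the same dominance of the leading principal minors under the scale separation produced by $e^{2t\Lambda}$, the same Cauchy--determinant evaluation whose telescoped ratios give $\ln m_a = -2\theta_a + \Delta_a(\lambda)$, and the same reversal trick (conjugation by $\cR_N$, i.e.\ exchanging the two blocks) for $t \to -\infty$. The one structural difference on that side is that the paper does not re-derive the scale-separation estimate from elementary symmetric functions; it conjugates by $\cW$ so that the diagonal exponent is decreasing and then invokes (a slight generalization of) Ruijsenaars' Theorem A2 of \cite{Ruij_CMP1988}, which delivers in one stroke $e^{2\bsq_a(t)} = m_a e^{2t\lambda_a}(1+\varepsilon_a(t))$ with $\max\{\vert\varepsilon_a(t)\vert,\vert\dot\varepsilon_a(t)\vert\}\le Ce^{-tr}$, the bounded additive term $X=\ri\kappa\cA\bsC\Lambda^{-1}$ being absorbed exactly as you indicate. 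Where you genuinely diverge is the momentum estimate: the paper simply differentiates $e^{2\bsq_a(t)} = m_a e^{2t\lambda_a}(1+\varepsilon_a(t))$ and uses $\bsp=\dot{\bsq}$ together with the bound on $\dot\varepsilon_a$ --- this is precisely why the derivative control in Theorem A2 is needed, and it makes the momentum bound free once the position bound is proved. Your alternative via isospectrality of the Lax matrix is viable but has two wrinkles you must address: (i) $L(\bsq(t),\bsp(t))$ does \emph{not} tend to a diagonal matrix, since $B_{c,c}\to\ri\kappa$ while $L$ subtracts $\ri\kappa\bsC$, so the $(2,1)$ block tends to $-2\ri\kappa\bsone_n$ and the limit is only block lower-triangular (its spectrum is still $\{\pm\bsp_a\}$, so the conclusion survives); (ii) you need to match $\bsp_a(t)$ to $\lambda_a$ rather than to some permuted $\lambda_{\sigma(a)}$ (e.g.\ via $\bsq_a(t)/t\to\lambda_a$) and to note that eigenvalue perturbation of a non-normal matrix near its \emph{simple} eigenvalues $\pm\lambda_a$ is Lipschitz, so the exponential rate is not degraded to a fractional power. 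Both points are repairable, but the paper's differentiation argument avoids them entirely; otherwise your proposal is sound and its ``technical crux'' is exactly the content the paper delegates to Ruijsenaars' theorem.
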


\begin{proof}
To prove the lemma, our guiding principle is Ruijsenaars' result on the 
temporal asymptotics of the spectra of exponential matrix flows (see 
Theorem A2 in \cite{Ruij_CMP1988}). First, we introduce the $n \times n$ 
matrix $\cR_n$ with entries 
\be
    (\cR_n)_{a, b} = \delta_{a + b, n + 1}.
\label{cR}
\ee    
By conjugating with
\be
    \cW = \begin{bmatrix}
        \bsone_n & 0_n \\
        0_n & \cR_n
    \end{bmatrix} 
    \in GL(N, \bC),
\label{cW}
\ee
we also define the $N \times N$ matrices
\be
    M = \cW \cA \sqrt{\bsone_N + \kappa^2 \Lambda^{-2}} \cW^{-1},
    \quad
    D = \cW \Lambda \cW^{-1},
    \quad
    X = \ri \kappa \cW \cA \bsC \Lambda^{-1} \cW^{-1}. 
\label{M&D&X}
\ee
Let us observe that the diagonal matrix $D$ has the property that its 
eigenvalues are in strictly decreasing order on the diagonal.  

By inspecting the matrix entries of $M$ (\ref{M&D&X}), notice that for all 
$a, b \in \bN_n$ we have 
\be
    M_{a, b} = \cA_{a, b} (1 + \kappa^2 \lambda_b^{-2})^\half.
\label{M_entries}
\ee
For each $a \in \bN_n$ let $M^{(a)}$ denote the leading principal
$a \times a$ submatrix taken from the upper-left-hand corner of $M$. Since 
$M^{(a)}$ is essentially a Cauchy matrix, for its determinant we have
\be
    \det(M^{(a)})
    = \prod_{b = 1}^a 
        e^{-2 \theta_b} 
        \vert z_b(\lambda) \vert 
        (1 + \kappa^2 \lambda_b^{-2})^\half
    \prod_{1 \leq c < d \leq a} 
        (1 + 4 \mu^2 (\lambda_c - \lambda_d)^{-2})^{-1}.
\label{det_M_a}
\ee
By taking the quotients of the consecutive leading principal minors of $M$, 
we also define
\be
    m_1 = M_{1, 1}
    \quad \text{and} \quad
    m_a = \frac{\det(M^{(a)})}{\det(M^{(a - 1)})}
    \qquad (2 \leq a \leq n).
\label{m_a_def}
\ee
From equation (\ref{det_M_a}) it follows that for all $a \in \bN_n$ we 
can write
\be
    m_a 
    = e^{-2 \theta_a} 
        \vert z_a(\lambda) \vert 
        (1 + \kappa^2 \lambda_a^{-2})^\half
        \prod_{b = 1}^{a - 1} 
            (1 + 4 \mu^2 (\lambda_a - \lambda_b)^{-2})^{-1}.
\label{m_a}
\ee
Therefore, recalling the formulae (\ref{z}) and (\ref{Delta}), we end up
with the concise expression
\be
    \ln(m_a) = - 2 \theta_a + \Delta_a(\lambda).
\label{ln_m_a}
\ee 

Utilizing the above objects, we are now in a position to analyze the 
asymptotic properties of the trajectory (\ref{LEMMA_Sutherland_flow}). 
By combining equations (\ref{Sutherland_spectral_identification_2}) and 
(\ref{M&D&X}), it is clear that for all $t \in \bR$ we can write
\be
    \sigma(e^{2 \bsQ(t)}) = \sigma(M e^{2 t D} + X).
\label{Sutherland_spectral_identification_OK}
\ee
By slightly generalizing Ruijsenaars' aforementioned theorem, one can easily 
verify that the exponentially growing large eigenvalues of the matrices 
$M e^{2 t D}$ and $M e^{2 t D} + X$ have essentially the same asymptotic 
properties as $t \to \infty$. More precisely, there are some positive 
constants $T > 0$, $r > 0$ and $C > 0$, such that for all $a \in \bN_n$ 
and for all $t > T$ we have
\be
    e^{2 \bsq_a(t)} = m_a e^{2 t \lambda_a} (1 + \varepsilon_a(t)),
\label{exp_2_q_a}
\ee
where the error term $\varepsilon_a(t)$ obeys the estimation
\be
    \max \{\vert \varepsilon_a(t) \vert, 
        \vert \dot{\varepsilon}_a(t) \vert \} 
    \leq C e^{-t r} \leq \half.
\label{varepsilon_estimation}
\ee
By taking the logarithm of equation (\ref{exp_2_q_a}), it readily follows 
that
\be
    \left\vert 
        \bsq_a(t) - t \lambda_a - \half \ln(m_a) 
    \right\vert
    = \half \vert \ln(1 + \varepsilon_a(t)) \vert 
    \leq \vert \varepsilon_a(t) \vert 
    \leq C e^{-t r}.
\label{q_a_asymptotics}
\ee
Due to the formula (\ref{ln_m_a}), for the large positive values of $t$ 
the control over $\bsq_a(t)$ is complete. Next, by taking the derivative 
of equation (\ref{exp_2_q_a}), we also find
\be
    \dot{\bsq}_a(t) 
    = \lambda_a 
        + \half \frac{\dot{\varepsilon}_a(t)}{1 + \varepsilon_a(t)}.
\label{q_a_dot}
\ee
However, the Hamiltonian equations of motion yield $\bsp = \dot{\bsq}$, 
whence we obtain
\be
    \vert \bsp_a(t) - \lambda_a \vert 
    = \half \frac{\vert \dot{\varepsilon}_a(t) \vert}
        {\vert 1 + \varepsilon_a(t) \vert}
    \leq \vert \dot{\varepsilon}_a(t) \vert
    \leq C e^{-t r}.
\label{p_a_asymptotics}
\ee
    
To conclude the proof, let us observe that by conjugating with the 
$N \times N$ matrix $\cR_N$ (\ref{cR}), the spectral identification 
(\ref{Sutherland_spectral_identification_OK}) can be rewritten as
\be
    \sigma(e^{2 \bsQ(t)}) 
        = \sigma 
            \left(
                \cR_N M \cR_N^{-1} e^{2 t \cR_N D \cR_N^{-1}} 
                    + \cR_N X \cR_N^{-1}
            \right).
\label{Sutherland_alternative_spectral_identification}
\ee
Since the eigenvalues of the diagonal matrix $\cR_N D \cR_N^{-1} = - D$ are
in strictly increasing order on the diagonal, the asymptotic relationships 
(\ref{LEMMA_Sutherland_minus_infty}) for $t \to -\infty$ can be established
by the same techniques that we used for the $t \to \infty$ case.
\end{proof}

Keeping the notations introduced in lemma \ref{LEMMA_Sutherland_asymptotics},
the asymptotics can be rewritten as
\be
    \bsq_a(t) \sim t p_a^\pm + q_a^\pm
    \quad \text{and} \quad
    \bsp_a(t) \sim p_a^\pm
    \qquad
    (t \to \pm \infty), 
\label{Sutherland_asymptotics}
\ee
where the asymptotic phases and momenta have the form
\be
    q_a^\pm = \mp \theta_a + \half \Delta_a(\lambda)
    \quad \text{and} \quad
    p_a^\pm = \pm \lambda_a,
\label{Sutherland_asymptotic_phases&impulses}
\ee
respectively. Notice that the asymptotic states $(q^\pm, p^\pm)$ belong to
the manifolds
\be
    \cP^\pm 
        = \{ (x, y) = (x_1, \ldots, x_n, y_1, \ldots, y_n) \in \bR^{2 n} 
            \, | \, 
            y_1 \gtrless \ldots \gtrless y_n \gtrless 0 \},
\label{cP_pm}
\ee
that we endow with the natural symplectic forms
\be 
    \omega^\pm = \sum_{a = 1}^n \dd x_a \wedge \dd y_a.
\label{omega_pm}
\ee 
One of the principal goals of scattering theory is to study the wave maps
\be
    W^S_\pm \colon \cP^S \rightarrow \cP^\pm,
    \quad
    (q, p) \mapsto (q^\pm, p^\pm).
\label{Sutherland_wave_maps}
\ee

\begin{THEOREM}
\label{THEOREM_Sutherland_scattering}
The wave maps $W^S_\pm$ of the Sutherland model are symplectomorphisms from 
$\cP^S$ onto $\cP^\pm$. The scattering map $S^S = W^S_+ \circ (W^S_-)^{-1}$
is also a symplectomorphism of the form
\be
\cP^- \ni (x, y) \mapsto S^S(x, y) 
        = \left( -x_1 + \Delta_1(-y), \ldots, -x_n + \Delta_n(-y),
        -y_1, \ldots, -y_n \right) \in \cP^+.
\label{S_S}
\ee
\end{THEOREM}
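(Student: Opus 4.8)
The plan is to push the entire problem onto the dual side through the symplectomorphism $\cS$ of (\ref{cS}). Reading the asymptotic phases and momenta off lemma \ref{LEMMA_Sutherland_asymptotics}, the wave maps factor as $W^S_\pm = \Phi_\pm \circ \cS$, where $\Phi_\pm \colon \cP^R \rightarrow \cP^\pm$ sends $(\lambda, \theta)$ to the point with coordinates $x_a = \mp \theta_a + \half \Delta_a(\lambda)$ and $y_a = \pm \lambda_a$. I would first record that each $\Phi_\pm$ is a diffeomorphism onto $\cP^\pm$: since $\lambda \in \mfc$ satisfies $\lambda_1 > \cdots > \lambda_n > 0$, the momentum components $y_a = \pm \lambda_a$ land exactly in the chambers $y_1 \gtrless \cdots \gtrless y_n \gtrless 0$ defining $\cP^\pm$, and for fixed $\lambda$ the position components depend affinely on $\theta$, so the maps are invertible with smooth inverse and are onto.

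The crux is to verify $\Phi_\pm^* \omega^\pm = \omega^R$. Computing the pullback of $\omega^\pm$ in (\ref{omega_pm}) gives $\sum_a \dd x_a \wedge \dd y_a = \omega^R \pm \half \sum_{a, b} \frac{\partial \Delta_a}{\partial \lambda_b} \, \dd \lambda_b \wedge \dd \lambda_a$, so everything hinges on the vanishing of the second term, which happens precisely when the Jacobian matrix $[\partial \Delta_a / \partial \lambda_b]$ is symmetric (a symmetric matrix contracted against the antisymmetric $\dd \lambda_b \wedge \dd \lambda_a$ gives zero). I expect this symmetry to be the main obstacle, and I would establish it by inspecting (\ref{Delta}) term by term: the summand built from $\ln(1 + 4 \mu^2 (\lambda_a + \lambda_b)^{-2})$ is manifestly symmetric and enters with a symmetric coefficient; the one-body terms $\half \ln(1 + \nu^2 \lambda_a^{-2})$ and $\half \ln(1 + \kappa^2 \lambda_a^{-2})$ contribute only on the diagonal and drop out of the wedge; and the delicate piece is the antisymmetrically weighted difference sum of $\ln(1 + 4 \mu^2 (\lambda_a - \lambda_b)^{-2})$, whose sign flip across $b \lessgtr a$ must be shown to conspire with the property $\partial_{\lambda_a} = - \partial_{\lambda_b}$ of that summand (it depends only on $\lambda_a - \lambda_b$) so that $\partial \Delta_a / \partial \lambda_b = \partial \Delta_b / \partial \lambda_a$ for all $a \neq b$. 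Granting this symmetry, $\Phi_\pm^* \omega^\pm = \omega^R$ follows, and combining with $\cS^* \omega^R = \omega^S$ shows that $W^S_\pm$ are symplectomorphisms onto $\cP^\pm$.

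Finally, the scattering map enjoys a decisive simplification: the dual symplectomorphism cancels, since $S^S = W^S_+ \circ (W^S_-)^{-1} = \Phi_+ \circ \cS \circ \cS^{-1} \circ \Phi_-^{-1} = \Phi_+ \circ \Phi_-^{-1}$, which is automatically a symplectomorphism from $\cP^-$ onto $\cP^+$ and, being independent of $\cS$, admits a closed form. Inverting $\Phi_-$ at a point $(x, y) \in \cP^-$ gives $\lambda_a = -y_a$ and $\theta_a = x_a - \half \Delta_a(-y)$, and substituting into $\Phi_+$ yields $x_a \mapsto -\theta_a + \half \Delta_a(-y) = -x_a + \Delta_a(-y)$ together with $y_a \mapsto \lambda_a = -y_a$, which is exactly the factorized expression (\ref{S_S}). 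This last step is purely algebraic once the symmetry computation of the second paragraph is complete.
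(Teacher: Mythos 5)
Your proposal is correct and follows essentially the same route as the paper: your maps $\Phi_\pm$ are exactly the paper's $\cT^S_\pm$ in (\ref{cT_S_pm}), the factorization $W^S_\pm = \Phi_\pm \circ \cS$ is (\ref{W_S_pm}), and the cancellation $S^S = \Phi_+ \circ \Phi_-^{-1}$ is the paper's final step. The only difference is that you explicitly verify the symmetry of $[\partial \Delta_a / \partial \lambda_b]$ (correctly, using that the pair terms depend evenly on $\lambda_a \mp \lambda_b$), a point the paper dismisses as evident by exhibiting the inverse (\ref{cT_pm_inverse}).
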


\begin{proof}
Let us introduce the maps
\be
    \cT^S_\pm \colon \cP^R \rightarrow \cP^\pm,
    \quad
    (\lambda, \theta) 
    \mapsto 
    \left(
        \mp \theta_1 + \half \Delta_1(\lambda), 
            \ldots, \mp \theta_n + \half \Delta_n(\lambda),
        \pm \lambda_1, \ldots, \pm \lambda_n 
    \right).
\label{cT_S_pm}
\ee
Recalling $\Delta_a$ (\ref{Delta}), it is evident that $\cT^S_\pm$ are 
symplectomorphisms with inverses 
\be
    (\cT^S_\pm)^{-1} (x, y) 
    = \left(
        \pm y_1, \ldots, \pm y_n, 
            \mp x_1 \pm \half \Delta_1(\pm y), 
                \ldots, \mp x_n \pm \half \Delta_n(\pm y)
    \right).
\label{cT_pm_inverse}
\ee
Moreover, remembering the asymptotic phases and momenta 
(\ref{Sutherland_asymptotic_phases&impulses}), it readily follows that the
wave maps (\ref{Sutherland_wave_maps}) are symplectomorphisms of the form
\be
    W^S_\pm = \cT^S_\pm \circ \cS.
\label{W_S_pm}
\ee
Since $S^S = \cT^S_+ \circ (\cT^S_-)^{-1}$, the explicit formula (\ref{S_S}) 
is also immediate. 
\end{proof}

\subsection{Scattering properties of the RSvD model}
Take an arbitrary $(\lambda, \theta) \in \cP^R$ and consider the
Hamiltonian flow of the RSvD model
\be
    \bR \ni t \mapsto (\bslambda(t), \bstheta(t)) \in \cP^R
\label{RSvD_flow}
\ee
passing through the point $(\lambda, \theta)$ at $t = 0$. Remembering
the matrices (\ref{cA}), (\ref{h}), (\ref{A_bc}), and the column vector
(\ref{cV}), we introduce the abbreviations
\be
    \cA = \cA(\lambda, \theta),
    \quad
    h = h(\lambda),
    \quad
    \cA^\bc = \cA^\bc (\lambda, \theta),
    \quad
    \cV = \cV(\lambda, \theta).
\label{cA&h&A_bc&cV}
\ee
Keeping the notation displayed in equation (\ref{Lambda}), for all 
$t \in \bR$ we also the define diagonal matrix
\be
    \bsLambda(t) 
    = \diag 
        \left(
            \bslambda_1(t), \ldots, \bslambda_n(t), 
            -\bslambda_1(t), \ldots, -\bslambda_n(t)
        \right).
\label{bsLambda}
\ee
Now, bearing in mind the symplectic reduction derivation of the RSvD model, 
let us note that the (complete) `linear' flow
\be
    \bR \ni t 
    \mapsto 
    \left(
        \cA^\half h^{-1}, 
        h \Lambda h^{-1} - t \left( \cA^\bc - (\cA^\bc)^{-1} \right), 
        \xi(\cV)
    \right) 
    \in \mfL_0 
\label{RSvD_unreduced_flow}
\ee
generated by the unreduced Hamiltonian $f$ (\ref{f}) projects onto the
(complete) reduced RSvD flow (\ref{RSvD_flow}). Recalling the parametrization
$\Upsilon^R_0$ (\ref{Upsilon_R_0}), it is evident that for all $t \in \bR$
we have
\be
    h \Lambda h^{-1} - t \left( \cA^\bc - (\cA^\bc)^{-1} \right)
    = k_R(t) h(\bslambda(t)) \bsLambda(t) h(\bslambda(t))^{-1} k_R(t)^{-1}
\label{mfg_component}
\ee
with some $k_R(t) \in K$. In particular, we obtain the spectral identification
\be
    \sigma(\bsLambda(t)) 
    = \sigma
        \left(
            h \Lambda h^{-1} - t \left( \cA^\bc - (\cA^\bc)^{-1} \right)
        \right).
\label{RSvD_spectral_identification}
\ee
The point is that each trajectory of the RSvD dynamics can be recovered by 
simply diagonalizing a linear matrix flow.

Utilizing the duality symplectomorphism $\cS$ (\ref{cS}), our next
goal is to parametrize the above linear matrix flow with the dual variables
$(q, p) = \cS^{-1}(\lambda, \theta) \in \cP^S$. Recalling the Sutherland
Lax matrix (\ref{L}), we first define $L = L(q, p)$. At this point, 
remembering the notation (\ref{Q}), it is clear that the form of 
$\Upsilon^S_0$ (\ref{Upsilon_S_0}) and the defining property of $\cS$ 
(\ref{cS_def}) lead to the relationships
\be
    \cA^\half h^{-1} = \eta_L e^Q \eta_R^{-1}
    \quad \text{and} \quad
    h \Lambda h^{-1} = \eta_R L \eta_R^{-1}
\label{RSvD_dual_parametrization}
\ee 
with some $\eta_L, \eta_R \in K$. It entails 
$\cA^\bc = \eta_R e^{2 Q} \eta_R^{-1}$, therefore the matrix
\be
    \cA^\bc - (\cA^\bc)^{-1} = 2 \eta_R \sinh(2 Q) \eta_R^{-1}
\label{RSvD_slope} 
\ee
has a simple spectrum. Moreover, recalling the relationship 
(\ref{RSvD_spectral_identification}), for the spectrum of $\bsLambda(t)$
we obtain the particularly useful formula
\be
    \sigma(\bsLambda(t)) 
    = \sigma 
        \left( 
            L - 2 t \sinh(2 Q)
        \right).
\label{RSvD_spectral_identification_2}
\ee
\begin{LEMMA}
\label{LEMMA_RSvD_asymptotics}
Take an arbitrary point $(\lambda, \theta) \in \cP^R$ and let 
$(q, p) = \cS^{-1}(\lambda, \theta) \in \cP^S$. Consider the Hamiltonian 
flow of the RSvD model
\be
    \bR \ni t \mapsto (\bslambda(t), \bstheta(t)) \in \cP^R
\label{LEMMA_RSvD_flow}
\ee
satisfying $(\bslambda(0), \bstheta(0)) = (\lambda, \theta)$. Then there 
are some positive constants $T > 0$ and $C > 0$, such that for all 
$a \in \bN_n$ and for all $t > T$ we have
\be
    \left\vert 
        \bslambda_a(t) 
            - \left( 2 t \sinh(2 q_a) - p_a \right) 
    \right\vert \leq C t^{-1}
    \quad \text{and} \quad
    \left\vert 
        \bstheta_a(t) - q_a 
    \right\vert \leq C t^{-2}, 
\label{LEMMA_RSvD_plus_infty}
\ee
whereas for all $a \in \bN_n$ and for all $t < - T$ we can write
\be
    \left\vert 
        \bslambda_a(t) 
            - \left( 2 t \sinh(-2 q_a) + p_a \right) 
    \right\vert \leq C \vert t \vert^{-1}
    \quad \text{and} \quad
    \left\vert 
        \bstheta_a(t) - (-q_a) 
    \right\vert \leq C \vert t \vert^{-2}. 
\label{LEMMA_RSvD_minus_infty}
\ee
\end{LEMMA}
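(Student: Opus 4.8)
The plan is to follow the skeleton of the proof of Lemma~\ref{LEMMA_Sutherland_asymptotics}, replacing the exponential spectral flow by a linear one. The starting point is the spectral identification (\ref{RSvD_spectral_identification_2}),
\[
    \sigma(\bsLambda(t)) = \sigma\left( L - 2 t \sinh(2 Q) \right),
\]
which exhibits the eigenvalues $\pm \bslambda_a(t)$ of $\bsLambda(t)$ as the eigenvalues of the linear matrix pencil $L + t D$, where $D = -2\sinh(2 Q)$ is the \emph{fixed} diagonal matrix with entries $D_{a,a} = -2\sinh(2 q_a)$ and $D_{n+a,n+a} = 2\sinh(2 q_a)$ for $a \in \bN_n$. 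Since $q \in \mfc$, these $N$ numbers are pairwise distinct, so $D$ has a simple spectrum; for $t > 0$ its $a$-th largest diagonal entry is $2\sinh(2 q_a)$, occupying the $(n+a)$-th slot.

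First I would establish the eigenvalue asymptotics of $L + t D$. Because $D$ has a simple spectrum, this is the linear analogue of the exponential estimate used in the previous lemma (Ruijsenaars' Theorem A2 in \cite{Ruij_CMP1988}), and it is cleanest to obtain it from analytic perturbation theory: writing $s = 1/t$, the matrix $D + s L$ has eigenvalues $\mu_j(s)$ that are analytic in $s$ near $s = 0$, with $\mu_j(s) = d_j + s L_{j,j} + O(s^2)$, where $d_j$ is the $j$-th diagonal entry of $D$. Multiplying by $t$ and matching the decreasing orderings of $\sigma(\bsLambda(t))$ and of $\sigma(L + t D)$ identifies $\bslambda_a(t)$ with the $(n+a)$-th branch, giving
\[
    \bslambda_a(t) = 2 t \sinh(2 q_a) + L_{n+a,n+a} + O(t^{-1}) = 2 t \sinh(2 q_a) - p_a + O(t^{-1}),
\]
since $L_{n+a,n+a} = -A_{a,a} = -p_a$ from (\ref{L})--(\ref{A_and_B}). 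Crucially, the same analyticity lets me differentiate the expansion, yielding $\dot{\bslambda}_a(t) = 2\sinh(2 q_a) + O(t^{-2})$.

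Next I would extract the $\bstheta_a$-asymptotics from the Hamiltonian equation of motion $\dot{\bslambda}_a = 2\sinh(2\bstheta_a)\,v_a(\bslambda)$ obtained from (\ref{H_R}). Since all $\bslambda_b(t) \to +\infty$ linearly, and their mutual differences and sums diverge as well, every factor of $v_a$ in (\ref{v_a}) is $1 + O(t^{-2})$, whence $v_a(\bslambda(t)) = 1 + O(t^{-2})$. Combining this with the derivative estimate gives $\sinh(2\bstheta_a(t)) = \sinh(2 q_a) + O(t^{-2})$, and inverting $\sinh$ (whose inverse is $1$-Lipschitz) yields $|\bstheta_a(t) - q_a| \leq C t^{-2}$, the claimed rate. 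The asymptotics for $t \to -\infty$ then follow by exactly the device used at the end of the previous proof: conjugating by $\cR_N$ (\ref{cR}) reverses the diagonal ordering of $D$, exchanging the roles of the positive and negative slots, so the same argument delivers (\ref{LEMMA_RSvD_minus_infty}).

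The main obstacle I anticipate is not the leading-order $\bslambda_a$-behavior but the two quantitative refinements that feed the $\bstheta_a$ estimate: obtaining the derivative bound $\dot{\bslambda}_a(t) = 2\sinh(2 q_a) + O(t^{-2})$ rather than merely the eigenvalue bound, and doing so for the \emph{non-self-adjoint} family $L + t D$ (the matrix $L$ carries an anti-Hermitian part $-\ri \kappa \bsC$, so ordinary Hermitian eigenvalue perturbation theory does not apply verbatim). Both are handled by the simplicity of $\sigma(D)$, which guarantees that the eigenprojections of $D + s L$ are analytic in $s$ and hence that each $\mu_j(s)$ admits a genuine convergent, differentiable expansion; the explicit $O(t^{-2})$ control of the $v_a$ factors is then routine. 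Some care is also needed to verify that the branch identification by ordering is legitimate, i.e.\ that $\bslambda(t)$ remains in $\mfc$ along the flow so that the $\pm\bslambda_a(t)$ stay distinct and are correctly matched to the $d_j$ for all large $t$.
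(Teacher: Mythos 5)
Your proposal is correct and follows essentially the same route as the paper: the spectral identification $\sigma(\bsLambda(t)) = \sigma(L - 2t\sinh(2Q))$, first-order eigenvalue perturbation theory for a linear pencil whose diagonal leading term has simple spectrum (the paper invokes Theorem A1 of \cite{Ruij_CMP1988} after conjugating by $\cW$ to put $D$ in decreasing order, which also settles your branch-matching worry), the equation of motion $\dot{\bslambda}_a = 2\sinh(2\bstheta_a)v_a(\bslambda)$ combined with $v_a(\bslambda(t)) = 1 + O(t^{-2})$ to control $\bstheta_a$, and conjugation by $\cR_N$ for $t \to -\infty$. The only cosmetic difference is that you phrase the perturbation step as analyticity in $s = 1/t$, whereas the paper quotes the quantitative bounds $|\varepsilon_a(t)| \leq \cC t^{-1}$, $|\dot{\varepsilon}_a(t)| \leq \cC t^{-2}$ directly from the cited theorem.
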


\begin{proof}
Making use of $\cR_n$ (\ref{cR}), we define 
\be
    \cW = \begin{bmatrix}
        0_n & \bsone_n \\
        \cR_n & 0_n
    \end{bmatrix}
    \in GL(N, \bC).
\label{RSvD_cW}
\ee 
With the aid of the $N \times N$ matrices 
\be
    D = -2 \cW \sinh(2 Q) \cW^{-1} 
    \quad \text{and} \quad
    M = \cW L \cW^{-1}, 
\label{RSvD_D&M}
\ee
the spectral identification (\ref{RSvD_spectral_identification_2}) can be 
cast into the form
\be
    \sigma(\bsLambda(t)) = \sigma(t D + M),
\label{RSvD_spectral_identification_OK}
\ee
where $D$ is a diagonal matrix with its eigenvalues in strictly decreasing
order on the diagonal. Therefore elementary perturbation theory can be 
readily applied to analyze the properties of $\bsLambda(t)$ as 
$t \to \infty$. (For a short account on the relevant facts from
perturbation theory see e.g. Theorem A1 in \cite{Ruij_CMP1988}.)
More precisely, there are some positive constants $T_0 > 0$ 
and $\cC > 0$, such that for all $a \in \bN_n$ and for all $t > T_0$ the 
eigenvalue $\bslambda_a(t)$ has the form
\be
    \bslambda_a(t) 
    = t D_{a, a} + M_{a, a} + \varepsilon_a(t)
    = 2 t \sinh(2 q_a) - p_a + \varepsilon_a(t),
\label{bslambda_a}
\ee
where the error term $\varepsilon_a(t)$ satisfies
\be 
    \vert \varepsilon_a(t) \vert \leq \cC t^{-1} 
    \quad \text{and} \quad    
    \vert \dot{\varepsilon}_a(t) \vert \leq \cC t^{-2}.
\label{RSvD_error_term_estimation}
\ee

Since the asymptotic property of $\bslambda(t)$ is under control, 
now we can turn our attention to the asymptotic analysis of $\bstheta(t)$. 
Making use of the equations of motion generated by the RSvD Hamiltonian $H^R$ 
(\ref{H_R}), we find
\be
    \dot{\bslambda}_a(t) = 2 \sinh(2 \bstheta_a(t)) v_a(\bslambda(t)).
\label{bslambda_dot}
\ee
From the asymptotic behavior of $\bslambda(t)$ (\ref{bslambda_a})
it is clear that there are some constants $T_1 \geq T_0$ and $\cK > 0$,
such that for all $a, b, c \in \bN_n$, $a \neq b$, and for all
$t > T_1$ we have
\be
    \vert \bslambda_a(t) - \bslambda_b(t) \vert \geq \cK t
    \quad \text{and} \quad
    \bslambda_c(t) \geq \cK t.
\label{cK_estimation}
\ee
By inspecting $v_a$ (\ref{v_a}), it is also evident that there are some 
constants $T \geq T_1$ and $\cH > 0$, such that for all $a \in \bN_n$ 
and for all $t > T$ we can write
\be
    v_a(\bslambda(t)) \leq 1 + \cH t^{-2}.
\label{cH_estimation}
\ee
By combining equations (\ref{bslambda_a}) and (\ref{bslambda_dot}),
for all $a \in \bN_n$ and for all $t > T$ we obtain
\be
    \sinh(2 \bstheta_a(t)) - \sinh(2 q_a) 
    = \frac{
        \left( 1 - v_a(\bslambda(t)) \right) \sinh(2 q_a)  
            + \half \dot{\varepsilon}_a(t)}
        {v_a(\bslambda(t))},
\label{RSvD_bstheta&q}
\ee
whence the estimation
\be
    \vert \bstheta_a(t) - q_a \vert
    \leq 
    \frac{\vert \sinh(2 \bstheta_a(t)) - \sinh(2 q_a) \vert}{2}
    \leq 
    \left( 
        \frac{\cH \sinh(2 q_1)}{2}  + \frac{\cC}{4} 
    \right) \frac{1}{t^2}
\label{bstheta_a}
\ee
is immediate. Thus the asymptotic relationships 
(\ref{LEMMA_RSvD_plus_infty}) are established.

To conclude, let us note that by applying the same techniques on the matrix 
flow 
\be
    t \mapsto \cR_N (t D + M) \cR_N^{-1},
\label{RSvD_flow_minus_infty}
\ee 
one can easily prove the remaining relationships 
(\ref{LEMMA_RSvD_minus_infty}) as well. 
\end{proof}

Keeping the notations of lemma \ref{LEMMA_RSvD_asymptotics}, our results
on the asymptotics can be rephrased as
\be
    \bslambda_a(t) \sim 2 t \sinh(2 \theta_a^\pm) + \lambda_a^\pm
    \quad \text{and} \quad
    \bstheta_a(t) \sim \theta_a^\pm
    \qquad
    (t \to \pm \infty),
\label{RSvD_asymptotics}
\ee
where for all $a \in \bN_n$ we have
\be 
    \lambda_a^\pm = \mp p_a 
    \quad \text{and} \quad 
    \theta_a^\pm = \pm q_a.
\label{RSvD_asymptotic_phases_and_momenta}
\ee
Notice that the asymptotic states $(\lambda^\pm, \theta^\pm)$ belong to 
the phase spaces $\cP^\pm$ (\ref{cP_pm}), therefore the RSvD model can be 
characterized by the wave maps
\be
    W^R_\pm \colon \cP^R \rightarrow \cP^\pm,
    \quad
    (\lambda, \theta) \mapsto (\lambda^\pm, \theta^\pm).
\label{RSvD_wave_maps}
\ee

\begin{THEOREM}
\label{THEOREM_RSvD_scattering}
The wave maps $W^R_\pm$ of the RSvD model are symplectomorphisms from 
$\cP^R$ onto $\cP^\pm$. The scattering map is also a symplectomorphism 
of the form
\be
    S^R = W^R_+ \circ (W^R_-)^{-1} \colon \cP^- \rightarrow \cP^+,
    \quad
    (x, y) \mapsto S^R(x, y) = (-x, -y).
\label{S_R}
\ee
\end{THEOREM}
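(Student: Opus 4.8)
The plan is to mirror the proof of Theorem \ref{THEOREM_Sutherland_scattering}, factoring each wave map through the duality symplectomorphism $\cS$ (\ref{cS}). First I would introduce the auxiliary maps $\cT^R_\pm \colon \cP^S \rightarrow \cP^\pm$ defined by $(q, p) \mapsto (\mp p_1, \ldots, \mp p_n, \pm q_1, \ldots, \pm q_n)$. The asymptotic phases and momenta (\ref{RSvD_asymptotic_phases_and_momenta}), combined with the relation $(q, p) = \cS^{-1}(\lambda, \theta)$, then yield the factorization $W^R_\pm = \cT^R_\pm \circ \cS^{-1}$.

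Next I would check that each $\cT^R_\pm$ is a symplectomorphism onto $\cP^\pm$. Bijectivity follows from the ordering $q_1 > \ldots > q_n > 0$ built into $\mfc$ (\ref{mfc}): writing $(x, y) = \cT^R_\pm(q, p)$, the block $y_a = \pm q_a$ inherits the strict chain $y_1 \gtrless \ldots \gtrless y_n \gtrless 0$ demanded in (\ref{cP_pm}), while $x_a = \mp p_a$ ranges freely over $\bR^n$. Symplecticity is immediate from the short computation $(\cT^R_\pm)^* \omega^\pm = \sum_{a = 1}^n \dd(\mp p_a) \wedge \dd(\pm q_a) = \sum_{a = 1}^n \dd q_a \wedge \dd p_a = \omega^S$, the product of the two opposite signs always being $-1$. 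Since $\cS$ is a symplectomorphism by the action-angle duality recalled in section \ref{SECTION_Preliminaries}, the factorization $W^R_\pm = \cT^R_\pm \circ \cS^{-1}$ establishes that the wave maps are symplectomorphisms from $\cP^R$ onto $\cP^\pm$.

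Finally, for the scattering map I would exploit the cancellation of the $\cS^{-1}$ factors: from $W^R_\pm = \cT^R_\pm \circ \cS^{-1}$ one obtains $S^R = W^R_+ \circ (W^R_-)^{-1} = \cT^R_+ \circ (\cT^R_-)^{-1}$, which is a symplectomorphism $\cP^- \rightarrow \cP^+$ as a composition of such maps. Computing $(\cT^R_-)^{-1}(x, y) = (-y, x)$ and substituting into $\cT^R_+$ gives $S^R(x, y) = (-x, -y)$, exactly the form claimed in (\ref{S_R}).

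I do not expect a genuine obstacle here. In contrast to the Sutherland case, the RSvD wave maps (\ref{RSvD_wave_maps}) involve only sign flips together with an interchange of the position and momentum blocks, with no $\Delta_a$-type phase correction to carry along; consequently the scattering map reduces to the pure reflection $(x, y) \mapsto (-x, -y)$. The only point demanding care is the bookkeeping of the $\pm/\mp$ signs and the block swap so that the ordering conventions of $\cP^+$ and $\cP^-$ in (\ref{cP_pm}) are respected—the substantive work, namely the symplecticity of $\cS$, having already been secured by the duality.
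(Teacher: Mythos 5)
Your proof is correct and follows essentially the same route as the paper: it introduces the same auxiliary symplectomorphisms $\cT^R_\pm(q,p) = (\mp p, \pm q)$, factorizes $W^R_\pm = \cT^R_\pm \circ \cS^{-1}$, and obtains $S^R = \cT^R_+ \circ (\cT^R_-)^{-1} = ((x,y) \mapsto (-x,-y))$ by cancellation of the duality map. The only difference is that you spell out the symplecticity and bijectivity checks for $\cT^R_\pm$ that the paper leaves implicit.
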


\begin{proof}
With the aid of the symplectomorphisms
\be
    \cT^R_\pm \colon \cP^S \rightarrow \cP^\pm,
    \quad
    (q, p) \mapsto (\mp p, \pm q)
\label{cT_R_pm}
\ee
the wave maps (\ref{RSvD_wave_maps}) can be realized as compositions 
of symplectomorphisms of the form
\be
    W^R_\pm = \cT^R_\pm \circ \cS^{-1}.
\label{W_R_pm}
\ee
Due to the relationship $S^R = \cT^R_+ \circ (\cT^R_-)^{-1}$, the formula 
(\ref{S_R}) also follows.
\end{proof}

\section{Discussion}
\label{SECTION_Discussion}
\setcounter{equation}{0}
For many physical systems it is the scattering theory that provides the main 
tool to investigate the properties of the constituent particles and the 
nature of their interaction. At the same time, scattering theory is a 
notoriously difficult subject heavily relying on non-trivial techniques from 
hard mathematical analysis. However, by exploiting the action-angle duality
between the Sutherland and the RSvD models, a careful analysis of their 
algebraic solution algorithms led us to rigorous temporal asymptotics of the
trajectories. Having a glance at the structure of the resulting wave maps 
$W^S_\pm$ (\ref{W_S_pm}) and $W^R_\pm$ (\ref{W_R_pm}), it is transparent 
that they are made of two strikingly different building blocks. The 
explicitly defined maps $\cT^S_\pm$ (\ref{cT_S_pm}) and $\cT^R_\pm$ 
(\ref{cT_R_pm}) are manifestly symplectic, meanwhile the construction and 
the symplecticity of $\cS$ (\ref{cS}) hinges on the symplectic reduction 
derivation of the Sutherland and the RSvD systems. That is, beside providing 
canonical action-angle variables, the geometric ideas culminating in the 
duality symplectomorphism $\cS$ (\ref{cS}) prove to be fundamental in the 
scattering theory as well.

Looking at the scattering maps $S^S$ (\ref{S_S}) and $S^R$ (\ref{S_R}),
we see that, up to an overall sign, the asymptotic momenta of both the 
hyperbolic $BC_n$ Sutherland and the rational $BC_n$ RSvD models are 
preserved. Following Ruijsenaars' terminology 
\cite{Ruij_FiniteDimSolitonSystems}, we can  say that these 
integrable systems are \emph{pure soliton systems} of type $BC_n$.
Moreover, for the Sutherland model the classical phase shifts are completely 
determined by the $2$-particle processes and the $1$-particle scattering on 
the external field. As for the rational RSvD system, the phase shifts are 
trivial. In other words, in both cases the scattering map has a factorized 
form. Due to the puzzling connection with the theory of solitons, it would 
be of considerable interest to develop an analogous theory at the quantum 
level, too. Relatedly, we find it a much more ambitious, but equally 
motivated research problem to classify the pure soliton systems associated 
with arbitrary root systems. We wish to come back to these issues in later 
publications.   

\medskip
\noindent
\textbf{Acknowledgments.}
This work was partially supported by the Hungarian Scientific Research 
Fund (OTKA) under Grant No. K 77400. The work was also supported by the 
European Union and co-funded by the European Social Fund under the project 
``Telemedicine-focused research activities on the field of Mathematics,
Informatics and Medical Sciences" of project number 
``T\'AMOP-4.2.2.A-11/1/KONV-2012-0073".


\end{document}